\newtheorem{thm}{Theorem}
\newtheorem{cor}[thm]{Corollary}
\newtheorem{claim}[thm]{Claim}
\begin{document}

\title{Blind-friendly von Neumann's Heads or Tails}
\author{Vin{\'i}cius G. Pereira de S{\'a} and Celina M. H. de Figueiredo}
\date{}
\maketitle

\begin{abstract}
The toss of a coin is usually regarded as the epitome of randomness, and has been used for ages as a means to resolve disputes in a simple, fair way. Perhaps as ancient as consulting objects such as coins and dice is the art of maliciously biasing them in order to unbalance their outcomes. However, it is possible to employ a biased device to produce equiprobable results in a number of ways, the most famous of which is the method suggested by von Neumann back in 1951. This paper addresses how to extract uniformly distributed bits of information from a nonuniform source. We study some probabilities related to biased dice and coins, culminating in an interesting variation of von Neumann's mechanism that can be employed in a more restricted setting where the actual results of the coin tosses are not known to the contestants.
\end{abstract}

\section{Introduction.}\label{s:intro}

Estimating probabilities is one of those tasks at which the human brain seems to be not very good.
Conditional probabilities, in particular, are frequently defying---and defeating---one's intuition, from the uninitiated to the specialist. This explains to a certain extent why people lose money gambling and on similar activities. Led astray by instinct, the inadvertent player overlooks probabilistic subtleties and misestimates the odds.

In this paper, we study some probabilities which may be rather counterintuitive. 
Although our results were formulated in the context of games between two opponents, they could as well have been framed in terms of the general engineering problem of converting biased randomness into unbiased randomness, 
a fascinating subject with plenty of literature available (see, for instance,~\cite{Chor85, Elias72, Samuelson68, Srinivasan99, Vembu95, Zuckerman91}). 

In Section~\ref{s:game},
we describe a game where the winning chances depend on the fairness of a die with $n \geq 2$ sides in a nontrivial way. Indeed, even knowing beforehand the exact probability associated to each side of the die, it may not be so easy to decide, between two seemingly equivalent strategies, which one is the most advantageous. Still, it is possible to show that one strategy is never worse than the other, no matter the bias or the number of sides of the die. 
We then look at some special cases. 
Particularly for $n=2$, 
we derive the winning chances 
as a function of the bias, showing how to maximize 
the probability that a given player wins.

In Section~\ref{s:negative_knowledge},
we consider the following questions regarding three independent and identically distributed (iid) random variables $A,B$, and $C$.
Are the events $C = B$ and $B \neq A$ always independent?
When, and to which extent, may (the knowledge of) the former affect (the probability of) the latter?
The answers to these questions
are closely related to the results
in Section~\ref{s:game}.

Finally, in Section~\ref{s:blind}, we leverage the results from the previous sections into a variation of von Neumann's method of playing a fair heads or tails with a biased coin. Numerous improvements on von Neumann's original idea have been studied over the decades (\cite{Blum86, Hoeffding70, Peres92, Stout84}, to mention but a few). Our method handles, however, an extra restriction: the players are not able to know the actual result of each coin toss, instead they are only aware of whether each toss produced the same result as the previous one along a sequence of independent tosses.

\section{A triple or two straight doubles?}\label{s:game}

In this section, we discuss a very simple dice game, not only for its own sake, but also for the useful inequality we obtain from it.

Consider a die with $n \geq 2$ sides. 
We propose a game where each player must choose between two strategies: playing for a ``triple'' or two ``consecutive doubles''. In the former strategy, the player throws the die three times, scoring a point if the die produces three identical results. In the latter, the player throws the die four times, scoring a point if the results after the first and the second throws match one another \emph{and} the third and fourth results also match one another. After repeating their sequences of throws a previously arranged number of times, the player with more points wins the game.

One might think that both strategies are equally good (or equally terrible, depending on how big $n$ is).
Indeed, when playing for a triple, there are two critical moments at which the player needs to be lucky: at the second throw, whose result is required to match the first one; and at the third throw, which is also required to match the other two. When playing for two doubles in a row, there are \emph{also} two such critical moments: at the second throw, when the player wants the die to match the result that just preceded it, and again at the fourth throw, for the same reason. 
In spite of their apparent equivalence, 
the odds for both strategies are not always as good, and how more
advantageous one strategy is depends on the fairness of the die.

In order to get some intuition, suppose our die is strongly biased towards one particular outcome (say it results in a $1$ in $90\%$ of the throws). What difference does it make, now, to play for a triple or two straight doubles? When playing for a triple, one has only three chances of ``going wrong'', that is, of throwing the die and obtaining something other than $1$. (Of course it is possible that such other-than-$1$ outcome appears three times in a row, but that is definitely unlikely.) On the other hand, when playing for two straight doubles, one has in practice \emph{four} chances of ``going wrong'', since any other-than-$1$ outcome that ends up occurring will most likely remain unmatched. Now we proceed to a more formal discussion.

\begin{claim}\label{cl:4dice}
If $A,B,C$, and $D$ are iid discrete random variables, then $\textrm{Pr}\{A=B~\land~C=D\} \leq \textrm{Pr}\{A=B=C\}.$
\end{claim}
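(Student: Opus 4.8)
The plan is to reduce this probabilistic statement to a pure inequality about the common probability mass function and then dispatch that inequality with a one-line application of Cauchy--Schwarz. Let $p_i = \textrm{Pr}\{A = i\}$ as $i$ ranges over the countable support of the common distribution, so that $\sum_i p_i = 1$ with each $0 \le p_i \le 1$. Since the four variables are (jointly) independent, the pair $(A,B)$ is independent of the pair $(C,D)$, whence
\[
\textrm{Pr}\{A = B \,\land\, C = D\} = \textrm{Pr}\{A=B\}\cdot\textrm{Pr}\{C=D\} = \Bigl(\sum_i p_i^2\Bigr)^{\!2},
\]
while splitting on the common value of $A,B,C$ gives $\textrm{Pr}\{A=B=C\} = \sum_i p_i^3$. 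Every series here converges absolutely, indeed $\sum_i p_i^3 \le \sum_i p_i^2 \le \sum_i p_i = 1$, which is what makes the rearrangements above legitimate.

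It then remains to prove the elementary fact that $\bigl(\sum_i p_i^2\bigr)^2 \le \sum_i p_i^3$. I would obtain it from the Cauchy--Schwarz inequality applied to the splitting $p_i^2 = p_i^{1/2}\cdot p_i^{3/2}$, namely
\[
\Bigl(\sum_i p_i^2\Bigr)^{\!2} \le \Bigl(\sum_i p_i\Bigr)\Bigl(\sum_i p_i^3\Bigr) = \sum_i p_i^3 .
\]
Equivalently, this is Jensen's inequality for the convex map $t \mapsto t^2$ against the probability weights $p_i$: $\bigl(\sum_i p_i\cdot p_i\bigr)^2 \le \sum_i p_i\cdot p_i^2$. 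Either way, combining with the two identities of the previous paragraph finishes the proof.

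I do not expect a genuine obstacle; the only points needing a little care are that it is joint (not merely pairwise) independence that licenses the factorization $\textrm{Pr}\{A=B\,\land\,C=D\} = \textrm{Pr}\{A=B\}\,\textrm{Pr}\{C=D\}$, and that interchanging and splitting the possibly infinite sums is justified by the absolute convergence noted above. As a bonus, tracking the equality case of Cauchy--Schwarz shows that the two probabilities coincide exactly when the common distribution is uniform on a finite support, which agrees with the intuition sketched before the statement; this refinement is not needed for the claim itself.
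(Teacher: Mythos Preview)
Your proof is correct and follows essentially the same route as the paper: both compute $\textrm{Pr}\{E_{2,2}\}=\bigl(\sum_i p_i^2\bigr)^2$ and $\textrm{Pr}\{E_3\}=\sum_i p_i^3$ and then apply Cauchy--Schwarz with the factorization $p_i^2=p_i^{1/2}\cdot p_i^{3/2}$. Your added remarks on absolute convergence and on the equality case are sound and go slightly beyond what the paper spells out.
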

\begin{proof}
Let $\Omega$ be the range of possible values of $A,B,C$, and $D$. For $i \in \Omega$, let $p_i$ be the probability $\textrm{Pr}\{A=i\} = \textrm{Pr}\{B=i\} = \textrm{Pr}\{C=i\} = \textrm{Pr}\{D=i\}$ that a given variable takes value $i$. Let $E_{2,2}$ denote the event that $A=B$ and $C=D$, and let $E_3$ denote the event that $A=B=C$. Since the variables are independent, we can write
\begin{eqnarray}\label{eq:E22}
\textrm{Pr}\{E_{2,2}\} = \left(\sum_{i \in \Omega} p_i^2\right)^2,
\end{eqnarray}
and
\begin{eqnarray}\label{eq:E3}
\textrm{Pr}\{E_3\} = \sum_{i\in \Omega} p_i^3.
\end{eqnarray}
We now show that $\textrm{Pr}\{E_{2,2}\} \leq \textrm{Pr}\{E_3\}$. Inspired by its use in~\cite{Guilherme}, we recall the Cauchy inequality~\cite[p.~373]{Cauchy}
$$\left(\sum_{i\in \Omega} x_i y_i\right)^2 \leq \left(\sum_{i \in \Omega} x_i^2\right)\left(\sum_{i\in \Omega} y_i^2\right).$$
Setting $x_i = p_i^{3/2}$ and $y_i = p_i^{1/2}$, we obtain
$$\left(\sum_{i\in \Omega} p_i^2\right)^2 \leq \left(\sum_{i \in \Omega} p_i^3\right)\left(\sum_{i\in \Omega} p_i\right) = \sum_{i\in \Omega} p_i^3 $$
as desired.
\end{proof}

As a consequence, the probability of obtaining a triple in three throws of a die is never less than the probability of obtaining two straight doubles in four throws of that die. It should now be clear that equality must not be taken for granted. We look at some special cases.

From now on, let $A,B,C$, and $D$ be specifically the results of four consecutive throws of an $n$-sided die; that is, iid random variables defined by selections on the sample space $\Omega = \{1,\ldots,n\}$. We denote by $p_i$ the probability that $i$ is the result obtained by throwing that die, for $i \in \Omega$. 

\paragraph{Perfectly fair dice.}
Consider the case where $p_i = 1/n$ for all $i \in \Omega$; that is, our die is ``perfectly fair''. In this situation, equations (\ref{eq:E22}) and (\ref{eq:E3}) yield
$$\textrm{Pr}\{E_3\} = \textrm{Pr}\{E_{2,2}\} = \frac{1}{n^2},$$
and therefore both players have the same probability of winning the game. 
As a matter of fact, it is a simple exercise to show that Claim~\ref{cl:4dice} holds with equality if, and only if, 
either the random variables are uniformly distributed over a subset of their sample space  
or one of the possible outcomes occur with probability $1$.

\paragraph{Perfectly loaded dice.}
Suppose the die was manufactured in such a way that the same side always ends upwards after a throw, e.g.~$p_1 = 1$, and $p_i = 0$ for $i \in \{2,\ldots,n\}$. Of course both strategies will score a point at each and every sequence of throws, and the game will almost certainly end in a draw. As in the perfectly fair case, there is no preferable strategy.

\paragraph{Coins.}
We now focus on the case $n=2$ and call our die a coin. We are interested in the function $d:\left[0,1\right] \to \left[0,1\right]$ that quantifies the advantage $d(p) = \textrm{Pr}\{E_3\} - \textrm{Pr}\{E_{2,2}\}$ of playing for a triple when the probability of our coin landing heads is $p$. By making $p_1 = p$ and $p_2 = 1-p$ in (\ref{eq:E22}) and (\ref{eq:E3}), we obtain
\begin{eqnarray*}
d(p)&=&\left[p^3 + (1-p)^3\right] - \left[p^2 + (1-p)^2\right]^2 \\
       &=&-4p^4 + 8p^3 - 5p^2 + p.
\end{eqnarray*}
It is now easy to see (Figure~\ref{f:corcovas}) that $d$ has minimum value $0$ at $p \in \{0,0{.}5,1\}$ as expected, and maximum value $0{.}0625$ at $p = 1/2 \pm  1/(2\sqrt{2}) \approx 0{.}5 \pm 0{.}3536$. In other words, $6{.}25\%$ is the largest possible probabilistic advantage that can be obtained in this game\footnote{
Should a winning margin of $6{.}25\%$ appear to be
rather small, compare it with the casino's winning margin of $5{.}26\%$ in
the American roulette, and of only $2{.}70\%$ in the European roulette.
Yet, we do not see casinos going bankrupt quite too often!
}. 
This is achieved by the player who plays for a triple when the probability of the coin landing heads (or tails, by symmetry) is approximately $85.36\%$. 

\begin{figure}[h]
\vspace{0.4cm}
\centering
\includegraphics[height=45mm]{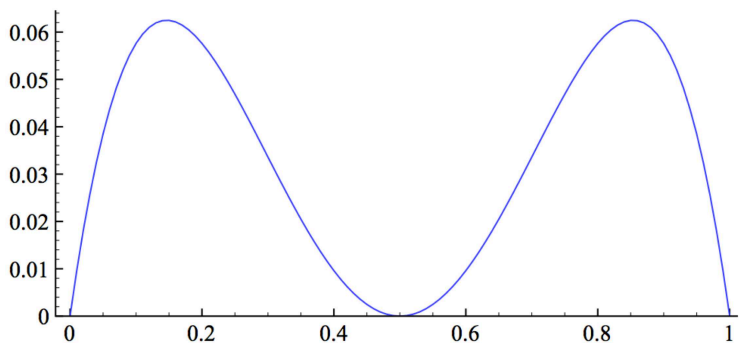}
\caption{\label{f:corcovas}Advantage of playing for a triple as a function of $p$}
\end{figure}

\section{Seemingly irrelevant knowledge.} \label{s:negative_knowledge}

Let us look at a different experiment. Someone throws a die with $n \geq 2$ sides three times in a row, and writes down the sequence of results, calling them $A, B$, and $C$. You want to guess whether $C = B$. Does it make any difference if you are told that $B \neq A$? 

Because the throws of the die are mutually independent, 
it is tempting to say
that the fact that the first two results ($A$ and $B$) do not match has no correlation whatsoever to whether the \emph{third} result ($C$) matches the second one. However, 
this reasoning turns
out to be deceiving.
The \emph{variables} are independent, but are the events they define necessarily so? What if the die is not perfectly fair?

Here again we start with an intuitive discussion. 
Suppose the die is so biased that one of its sides (say, $1$) lands upwards in $90\%$ of the throws. 
In this case, the event $B \neq A$ reveals that something unexpected happened, that is, $A$ and $B$ are not both $1$.
Since it is highly probable that $C$ \emph{is} $1$, 
we cannot anymore say that our
assessment of the probability associated to the event $C \neq B$ was unaffected by the knowledge that $B \neq A$. 

More formally, 
let $\Omega$ be the range of possible values of three iid random variables $A,B$, and $C$. For $i \in \Omega$, let $p_i$ be the probability $\textrm{Pr}\{A=i\} = \textrm{Pr}\{B=i\} = \textrm{Pr}\{C=i\}$ that a given variable takes value $i$.

Clearly,
\begin{eqnarray}\label{eq:PrCA}
\textrm{Pr}\{C=B\} = \sum_{i \in \Omega} p_i^2
\end{eqnarray}
and
\begin{eqnarray}\label{eq:PrCAnotB}
\textrm{Pr}\{C=B~|~B\neq A\} =  \frac{\textrm{Pr}\{C=B\neq A\}}{\textrm{Pr}\{B\neq A\}} 
                                               = \frac{\sum_{i \in \Omega} [p_i^2 (1-p_i)]}{\sum_{i \in \Omega} [p_i (1-p_i)]}.
\end{eqnarray}

The following result relates the two probabilities above.

\begin{claim}[Fonseca et al.~\cite{Guilherme}, Lemma $6$]\label{cl:3dice}
Given three iid discrete random variables $A,B$, and $C$, we have $\textrm{Pr}\{C=B~|~B\neq A\} \leq \textrm{Pr}\{C=B\}.$
\end{claim}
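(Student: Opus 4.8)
The plan is to reduce the claimed inequality, by straightforward algebra on the closed forms (\ref{eq:PrCA}) and (\ref{eq:PrCAnotB}), to the inequality $\left(\sum_i p_i^2\right)^2 \le \sum_i p_i^3$ that we have already proved as Claim~\ref{cl:4dice}. First I would dispose of a degenerate case: if $\textrm{Pr}\{B \neq A\} = 0$, that is, if some outcome has probability $1$, then the conditional probability $\textrm{Pr}\{C=B \mid B\neq A\}$ is not even defined, so we may assume $\textrm{Pr}\{B\neq A\} = \sum_{i\in\Omega} p_i(1-p_i) > 0$, equivalently $\sum_{i\in\Omega}p_i^2 < 1$.

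Next I would rewrite the two sums in (\ref{eq:PrCAnotB}) in terms of $S_2 := \sum_{i\in\Omega} p_i^2$ and $S_3 := \sum_{i\in\Omega} p_i^3$. Using $\sum_{i\in\Omega} p_i = 1$ one gets $\sum_{i\in\Omega} p_i(1-p_i) = 1 - S_2$ and $\sum_{i\in\Omega} p_i^2(1-p_i) = S_2 - S_3$, so the desired inequality $\textrm{Pr}\{C=B \mid B\neq A\} \le \textrm{Pr}\{C=B\}$ becomes
$$\frac{S_2 - S_3}{1 - S_2} \le S_2.$$
Since $1 - S_2 > 0$, I can clear the denominator and cancel: this is equivalent to $S_2 - S_3 \le S_2 - S_2^2$, i.e.\ to $S_2^{\,2} \le S_3$, that is,
$$\left(\sum_{i\in\Omega} p_i^2\right)^{\!2} \le \sum_{i\in\Omega} p_i^3.$$

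This last inequality is exactly Claim~\ref{cl:4dice}: one may either introduce a fourth variable $D$ iid with $A,B,C$ so that the left-hand side is $\textrm{Pr}\{E_{2,2}\}$ and the right-hand side is $\textrm{Pr}\{E_3\}$, or simply invoke the Cauchy inequality with $x_i = p_i^{3/2}$ and $y_i = p_i^{1/2}$ as done there. Either way the reduction closes and the claim follows. There is no genuine obstacle here; the only steps demanding a moment's care are the bookkeeping in expressing the sums through $S_2$ and $S_3$, and the observation that $1 - S_2$ is strictly positive (so multiplying across the inequality preserves its direction), which is precisely why the degenerate all-mass-on-one-outcome case had to be set aside at the outset.
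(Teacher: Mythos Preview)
Your argument is correct: the reduction to $S_2^{\,2}\le S_3$ is clean, and that inequality is precisely what was established in Claim~\ref{cl:4dice} via Cauchy's inequality. The handling of the degenerate case $\textrm{Pr}\{B\neq A\}=0$ is also appropriate.

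As for comparison: the paper does not actually supply its own proof of Claim~\ref{cl:3dice}; it attributes the result to Fonseca et al.\ and moves on to an illustrative example. So there is nothing to compare against directly. That said, your route---showing Claim~\ref{cl:3dice} is \emph{equivalent} to Claim~\ref{cl:4dice} once one unwinds the conditional probability---is a nice observation in its own right, since it makes explicit that the two claims in Sections~\ref{s:game} and~\ref{s:negative_knowledge} are not merely analogous but algebraically the same inequality in disguise.
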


For an example where equality does not hold, let \mbox{$\Omega = \{1,2,3\}$}, \mbox{$p_1 = 0.8$}, \mbox{$p_2=p_3=0.1$}. In this case, $\textrm{Pr}\{C=B\} = 0.66$, whereas $\textrm{Pr}\{C=B~|~B\neq A\} \approx 0.429$.

In the realm of dice, the probability that the third result matches the second one given that the second result does not match the first one in a sequence of three independent throws of a die is less than or equal to the unconditional probability that the third result matches the second one. 

\paragraph{Perfectly fair dice.}
By replacing $p_i$ with $1/n$ in equations~(\ref{eq:PrCA}) and~(\ref{eq:PrCAnotB}), it is straightforward to verify 
that both probabilities are equal to $1/n$. In other words, if the die is fair, then the probability of having a match between the third and the second results is not at all affected by the fact that the second throw and the first one do not match. 
The events are in this case independent.

\paragraph{Perfectly loaded dice.}
If the die always lands with the same side upwards, 
then it is impossible that two throws have different
outcomes. Our comparison here is then meaningless.

\paragraph{Coins.}
When $n=2$, the unconditional probability that two tosses of the coin yield the same result is clearly 
\begin{eqnarray}\label{eq:pure2coins}
\textrm{Pr}\{C=B\}  =  p_1^2 + p_2^2 = p^2 + (1-p)^2 = 2p^2 - 2p + 1,
\end{eqnarray}
where $p$ denotes, without loss of generality, the probability of heads. 
Such a probability has minimum value $0{.}5$ at $p = 0{.}5$ (i.e.~for a fair coin).

On the other hand, the conditional probability we are interested in 
is obtained from equation~(\ref{eq:PrCAnotB}) and translates to 
\begin{eqnarray*}
\textrm{Pr}\{C=B~|~B\neq A\}  
                                               & = & \frac{p^2 (1-p) + (1-p)^2 p}{2p - 2p^2} \\
                                               & = & \frac{p - p^2}{2p - 2p^2} \\
                                               & = & 0.5,
\end{eqnarray*}                                               
regardless of $p$ (of course, provided $p$ is not $0$ or $1$). 

A simpler way of seeing this is noting that, 
given that the first two coins came up differently,
the second coin is uniformly distributed over heads and tails. 
Thus the probability that the third coin results the same as the second is
$0.5p + 0.5(1-p) = 0.5.$

This interesting result, which motivates the whole next section, is summarized in the corollary below. We recall that a Bernoulli random variable is such that it takes value $1$ with ``success probability'' $p$ and value $0$ with ``failure probability'' $1-p$.

\begin{cor}\label{co:fair}
Given three independent Bernoulli random variables $A,B$, and $C$ with success probability $0 < p < 1$, we have $\textrm{Pr}\{C=B~|~B\neq A\} = 0{.}5$ regardless of $p$.
\end{cor}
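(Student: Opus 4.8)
The plan is to specialize Claim~\ref{cl:3dice} and its surrounding computation to the Bernoulli (two-valued) case, exploiting the algebraic simplification that occurs when $\Omega = \{0,1\}$. Writing $p_1 = p$ and $p_0 = 1-p$, the general formula~(\ref{eq:PrCAnotB}) gives
\begin{eqnarray*}
\textrm{Pr}\{C=B~|~B\neq A\} = \frac{p^2(1-p) + (1-p)^2 p}{p(1-p) + (1-p)p}.
\end{eqnarray*}
The key observation is that the numerator factors as $p(1-p)\bigl[p + (1-p)\bigr] = p(1-p)$, while the denominator is $2p(1-p)$, so the quotient collapses to $1/2$ once we divide by the common nonzero factor $p(1-p)$, which is legitimate precisely because $0 < p < 1$. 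This is essentially the computation already displayed in the ``Coins'' paragraph, so the work is to present it cleanly and note the nondegeneracy hypothesis.

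Alternatively, and more conceptually, I would give the symmetry argument sketched just before the corollary: conditioned on the event $B \neq A$, the pair $(A,B)$ is supported on $\{(0,1),(1,0)\}$, and since $A$ and $B$ are iid, these two ordered pairs are equally likely (each has unconditional probability $p(1-p)$). Hence, conditionally, $B$ is a fair coin flip---uniform on $\{0,1\}$---independently of the value of $p$. Since $C$ is independent of $A$ and $B$, the conditional probability that $C = B$ is $\textrm{Pr}\{C=0\}\cdot\textrm{Pr}\{B=0 \mid B\neq A\} + \textrm{Pr}\{C=1\}\cdot\textrm{Pr}\{B=1 \mid B\neq A\} = (1-p)\cdot\tfrac12 + p\cdot\tfrac12 = \tfrac12$.

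Either route is essentially immediate, so there is no real obstacle; the only point requiring care is the explicit exclusion of $p \in \{0,1\}$, where the conditioning event $B \neq A$ has probability zero and the statement is vacuous (or undefined). I would state the proof via the short symmetry argument, since it makes transparent \emph{why} the dependence on $p$ disappears---the bias affects $A$ and $B$ in a perfectly correlated way, so requiring them to differ washes it out---and then remark that the same conclusion drops out of~(\ref{eq:PrCAnotB}) by direct substitution. This corollary is then the springboard for Section~\ref{s:blind}: it says that the indicator of ``two consecutive tosses disagree'' is itself an unbiased bit, which is exactly what a blind-friendly variant of von Neumann's trick needs.
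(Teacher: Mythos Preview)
Your proof is correct and matches the paper exactly: the text immediately preceding the corollary gives both the direct substitution into~(\ref{eq:PrCAnotB}) and the symmetry argument that $B$ is conditionally uniform on $\{0,1\}$, just as you do. One small slip in your closing remark: the corollary does \emph{not} say that the indicator of ``two consecutive tosses disagree'' is an unbiased bit (that event has probability $2p(1-p)$, not $1/2$); rather, \emph{conditioned} on such a disagreement, whether the \emph{next} toss matches the second is a fair bit---and that is indeed the mechanism exploited in Section~\ref{s:blind}.
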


\section{Hand claps and whistles.}\label{s:blind}

The idea of playing a fair heads or tails with a biased coin\footnote{
In spite of the numerous references to such entity over the centuries, some recent evidence seems to show that there is no such thing as a biased coin that is caught by the hand (i.e.~the coin is allowed to spin in the air, but not to bounce). See \cite{Gelman02} for finding out why the biased coin may be considered the unicorn of probability theory.
}
is attributed to von Neumann~\cite{vonNeumann51}.
The coin is tossed twice in a row. The first player wins if the outcome is a heads-tails sequence, whereas the second player wins with a tails-heads sequence. If two identical results are obtained, another \emph{turn} of two coin tosses starts from scratch. The probability of winning the game at a certain turn is identical for both players, namely $p(1-p)$, where $p$ is the probability that a coin toss results heads. Thus, the probability that \emph{some} player wins at a certain turn is $q = 2p(1-p)$, and the number of turns until the game has a winner is a geometric random variable $X$ whose expectation is $\textrm{E}\{X\} = 1/q = 1/(2(p-p^2))$. Since each turn comprises exactly two coin tosses, the expected number of tosses until a player wins using von Neumann's method is $2 \textrm{E}\{X\} = 1/(p-p^2)$.

Now suppose we have a situation where the coin is concealed, that is, the contestants are not able to see the outcome of each toss. Instead, they can only figure out, after each toss, whether the result happened to match the one that just preceded it, which we refer to as the \emph{base} result. 
For the sake of illustration, consider that the coin is tossed by a
trusted third party who claps hands each time the coin toss results the
same as in the previous toss, and who whistles otherwise.
Only after the very first toss (the \emph{initial toss}, in each game), no sound is produced, since there is no base result to compare with. 
We want to assure fairness, and we certainly cannot use von Neumann's original method, since the actual coin results are not known to the players. 

We first discuss wrong ways of trying to
obtain a fair game.

\subsection{Unsuccessful attempts at fairness.}

In the proposed setting, the players cannot see the actual results,  
but they hear something after each toss\footnote{Except for the initial toss, as mentioned above.}: a hand clap (Cl) or a whistle (Wh). Then why not simply regard a hand clap as heads, a whistle as tails, and play the good old heads or tails? 
In other words, after the initial toss, have the coin be tossed exactly once more. Player $1$ wins with a hand clap, Player $2$ wins with a whistle. Of course one can play such a game, but one should not expect it to be fair if the coin itself is not fair. Indeed, the greater the coin bias, the greater the advantage of playing for a hand clap. Formally, if $p$ is the probability that the coin lands heads, then the probability that Player $1$ wins is equal to the probability that two independent tosses of the coin yield the same result, namely $p^2 + (1-p)^2 = 2p^2 - 2p + 1$, 
whose minimum value $0{.}5$ is obtained at $p=0{.}5$, as in~(\ref{eq:pure2coins}).

Aiming at neutralizing the coin bias, a second, natural attempt is to use a straightforward translation of von Neumann's idea based on the perceived sounds. After the initial toss, which sets the first base value, the coin is tossed twice in a row. Player $1$ wins with a Cl-Wh sequence,
Player $2$ wins with a Wh-Cl sequence.
If any other sequence occurs, the game proceeds with another turn of two coin tosses. Is the game now fair, regardless of the coin bias? We show that it is not, except again if $p = 0{.}5$.

To calculate the probability $P_1$ that Player $1$ wins the game, 
let $F^H$ and $F^T$ denote the events where the initial toss results heads (H) or tails (T), respectively,  so that 
\begin{eqnarray*}
P_1  =  \textrm{Pr}\{P_1~|~F^H\} \cdot \textrm{Pr}\{F^H\} + \textrm{Pr}\{P_1~|~F^T\} \cdot \textrm{Pr}\{F^T\}
\end{eqnarray*}
by the law of total probability.
By letting  
$P_1^H = \textrm{Pr}\{P_1~|~F^H\}$ and
$P_1^T = \textrm{Pr}\{P_1~|~F^T\}$,
we can write
\begin{eqnarray}\label{eq:p1}
   P_1   =  P_1^H \cdot p + P_1^T \cdot (1-p).
\end{eqnarray}

We now obtain $P_1^H$ conditioned on the results of the two coin tosses that followed the initial toss (whose result was heads, by definition):
\begin{itemize}
\item {[base H]} H-H --- this sequence yields two hand claps, no player wins yet, and the probability that Player $1$ eventually wins is still $P_1^H$ due to the memorylessness of the process, which now starts over with a new turn of two tosses still having heads as base value;

\item {[base H]} H-T --- this gives a Cl-Wh sequence, and victory is awarded to Player $1$; 
\item {[base H]} T-H --- we have two whistles and no winner, hence the probability of Player $1$ winning is still $P_1^H$, as in the case H-H;

\item {[base H]} T-T --- a whistle followed by a hand clap: victory goes straight to Player $2$. 
\end{itemize}

Using again the law of total probability, we can write
\begin{eqnarray*}
P_1^H & = & \sum_{S\in{\cal S}} \textrm{Pr}\{P_1^H~|~S\} \cdot \textrm{Pr}\{S\}\\
      & = & P_1^H  \cdot p^2
       +  1      \cdot p(1-p)
       +  P_1^H  \cdot (1-p)p
       +  0      \cdot (1-p)^2,
\end{eqnarray*}

\noindent where ${\cal S} = \{$H-H, H-T, T-H, T-T$\}$ is the set of events associated to the two coin tosses that follow the initial toss, as mentioned above. After some easy manipulations, we obtain 
\begin{equation}\label{p1h}
P_1^H = p.
\end{equation}

\noindent An analogous argument shows that
\begin{equation}\label{p1t}
P_1^T = 1-p.
\end{equation}

By plugging (\ref{p1h}) and (\ref{p1t}) into (\ref{eq:p1}), we obtain $P_1 = p^2 + (1-p)^2 = 2p^2 - 2p + 1$. 
This is again the same expression as in (\ref{eq:pure2coins}), which has minimum value $0{.}5$ at $p = 0{.}5$. 
In short, the game is only fair when the coin itself is fair.

Before proceeding,
it is interesting to notice that here---unlike the coin game proposed in Section~\ref{s:game}, where a probability of heads around $0{.}5 \pm 0{.}3536$ would maximize a player's winning odds---the greater the coin bias, the greater the probability that Cl-Wh wins over Wh-Cl, in spite of the apparent symmetry of both strategies. 
One could try this game against an inadvertent opponent (perhaps someone who erroneously regard it 
as being equivalent to von Neumann's method). 
However, using a similar reasoning as in the calculation of $P_1$, we see that
the expected number of tosses before the game finishes (disregarding the initial toss) is $1/(p-p^2)$,
as in von Neumann's method. 
This fact cannot be
overlooked. For example, when the coin is so biased that $p = 0{.}999$,
though the Cl-Wh player has a winning probability of $98{.}02\%$, the coin
needs to be tossed $1001$ times on average before the game has a winner!

In the next section, we look into a variation 
that successfully shields the players from whatever bias the concealed coin might have (provided both sides appear with non-zero probability).

\subsection{Fair game with a concealed biased coin.}\label{ss:fairht}

We want to devise a coin game with the following properties:

\begin{enumerate}[1)]
\item the coin is possibly biased, yet the exact bias is unknown to the players;
\item both players have equal chances of winning;
\item the actual coin results are not revealed, but it is possible to infer whether any two consecutive coin tosses yielded the same result.
\end{enumerate}

The two first conditions are the usual ones. Ideally, we would like to cope with the third, new condition without incurring any increase in the 
expected duration of the game.

First, let us consider the natural solution of tossing the coin four times, considering the sounds $X$ and $Y$ produced after the second and fourth tosses, respectively. Player $1$ wins with a Cl-Wh sequence, Player $2$ wins with Wh-Cl, and the other possible sequences are discarded, triggering four fresh tosses of the coin. By doing so, $X$ and $Y$ are clearly independent, and this approach corresponds exactly to von Neumann's idea, therefore assuring a fair game. The drawback is that it may take twice as long, with an expected number of $1 / (-2p^4 + 4p^3 - 3p^2 + p)$ coin tosses, which can be easily checked.

Now consider the following variation. The main action consists of tossing the coin two consecutive times, whose results we call respectively $A$ and $B$, constituting a \emph{turn}. If a whistle is heard after the second toss (meaning $B \neq A$), then a third coin toss---whose result we call $C$---will decide the game. Player $1$ wins if $C \neq B$ (indicated by the subsequent whistle), Player $2$ wins if $C = B$ (indicated by the subsequent hand clap).
On the other hand, if the sound alert after the second coin toss is a hand clap (meaning $B = A$), then the game continues with a fresh turn of two coin tosses that will set the values of $A$ and $B$ from scratch, configuring a perfectly memoryless process. 
Note that an initial toss that would set a base value prior to the first turn is not even necessary here, since the players only care about sound alerts that come with even parity (after the second, fourth, sixth etc.~tosses) until a whistle with even parity is heard for the first time. 
When it eventually happens, a single, decisive extra toss closes the game.

\begin{thm}
The proposed coin game, where Player $1$ wins with a whistle of even parity immediately followed by another whistle, and Player $2$ wins with a whistle of even parity immediately followed by a hand clap, is a perfectly fair game, no matter the probability of heads $0 < p < 1$ of the employed coin.
\end{thm}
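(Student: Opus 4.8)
The plan is to reduce the theorem to Corollary~\ref{co:fair} by peeling off the ``rejected'' turns. The key observation is that the game proceeds in independent turns of two tosses each; a turn is \emph{rejected} (and the process restarts from scratch) precisely when the second toss matches the first (a hand clap of even parity), and a turn is \emph{accepting} precisely when the second toss differs from the first (a whistle of even parity). By the memorylessness of the process, the eventual winner is determined entirely by what happens in the first accepting turn together with the single decisive toss $C$ that follows it. So it suffices to show that, \emph{conditioned on the event that a turn is accepting}, Player~$1$ and Player~$2$ are equally likely to win.

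Next I would make that conditioning explicit. Let $A,B$ be the two tosses of a turn and $C$ the decisive toss; $A,B,C$ are i.i.d.\ Bernoulli$(p)$ with $0<p<1$. An accepting turn is the event $B \neq A$. Player~$2$ wins iff $C = B$, and Player~$1$ wins iff $C \neq B$. Hence
\begin{eqnarray*}
\textrm{Pr}\{\text{Player 2 wins} \mid \text{accepting turn}\} &=& \textrm{Pr}\{C = B \mid B \neq A\} = 0{.}5,
\end{eqnarray*}
by Corollary~\ref{co:fair}, and consequently $\textrm{Pr}\{\text{Player 1 wins} \mid \text{accepting turn}\} = 0{.}5$ as well, regardless of $p$.

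Finally I would assemble the pieces. Since each turn is independently accepting with the same probability $r = 2p(1-p) > 0$, the index of the first accepting turn is a geometric random variable that is almost surely finite, so the game ends with probability~$1$; and since the winner is decided only within that first accepting turn, the unconditional winning probability of each player equals its conditional probability given an accepting turn, namely $0{.}5$. (One can phrase this either via the law of total probability summed over the turn at which acceptance first occurs, or simply by noting that the rejected turns carry no information and can be discarded.) This establishes that the game is perfectly fair for every $0 < p < 1$.

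I do not expect a genuine obstacle here: the only thing to be careful about is the bookkeeping of the restart structure — namely, justifying that ``condition on the first accepting turn'' is legitimate, which follows from the turns being i.i.d.\ and the acceptance probability being strictly positive (this is where the hypothesis $0 < p < 1$ is used, ruling out a coin that never produces a whistle). Once that is in place, Corollary~\ref{co:fair} does all the real work, since the decisive comparison $C$ versus $B$ given $B \neq A$ is exactly the scenario of three i.i.d.\ Bernoulli variables analyzed in Section~\ref{s:negative_knowledge}.
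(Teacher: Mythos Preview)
Your proposal is correct and follows essentially the same approach as the paper's own proof: identify the two tosses of the first accepting turn and the decisive toss as three i.i.d.\ Bernoulli$(p)$ variables $A,B,C$, condition on $B\neq A$, and invoke Corollary~\ref{co:fair}. You are somewhat more explicit than the paper about the bookkeeping (almost-sure termination via the geometric waiting time, and why discarding rejected turns is legitimate), but the core argument is identical.
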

\begin{proof}
Let $A$ and $B$ be the results of the two coin tosses that preceded the first whistle with even parity. That very whistle indicates $B \neq A$. The next coin toss, whose result we call $C$, will decide the game in Player 1's favor if $C \neq B$, and in Player 2's favor if $C = B$. Using a simple (tails $\to 0$, heads $\to 1$) mapping, those three results are clearly independent Bernoulli random variables with success probability $p$.
Thus, 
by Corollary~\ref{co:fair}, both players have identical probabilities $$\textrm{Pr}\{C \neq B~|~B\neq A\} = \textrm{Pr}\{C = B~|~B\neq A\} = 0{.}5$$ of winning the game.
\end{proof}
 
As for the expected number of tosses, notice that the event that a whistle is heard after the second toss of a turn corresponds to the event that the two consecutive tosses of that turn produces either a H-T or a T-H sequence, hence it occurs with probability $2p(1-p)$. The number of turns before the game finishes is therefore a geometric random variable $X$, whose expectation is $\textrm{E}\{X\} =1/(2p(1-p))$, plus one (corresponding to the final, decisive toss). Thus, the expected number of tosses in the game is 
\begin{equation}\label{expected_number_of_tosses}
2 \cdot \textrm{E}\{X\} + 1 = \frac{1}{p-p^2} + 1,
\end{equation}
which is only one coin toss greater than in the original von Neumann's method.
 
It is interesting
to observe that, if in an attempt to reduce its expected duration we
decided the game right after the first whistle regardless of its
parity, then two unfortunate consequences would ensue:
\begin{enumerate}[(i)]
\item the intended ``improvement'' would \emph{not} decrease the expected length of the game at all; and, most importantly, 
\item we would not be able to assure the fairness of the game anymore! 
\end{enumerate}

To show (i), let $F^H$ (and, respectively, $F^T$) denote the event that the first coin toss in the game yields heads (respectively, tails), and let $Y$ be the random variable corresponding to the number of tosses before the first whistle is heard. Using conditional expectations, we can write the overall expected number of tosses in the modified game as 
\begin{eqnarray*}
1 + \textrm{E}\{Y\} & = & 1 + \textrm{E}\{Y~|~F^H\} \cdot \textrm{Pr}\{F^H\} + \textrm{E}\{Y~|~F^T\} \cdot \textrm{Pr}\{F^T\} \\
                              & = & 1 + \left(1 + \frac{1}{1-p}\right) \cdot p + \left(1 + \frac{1}{p}\right) \cdot (1-p) \\ 
                              & = & 1 + \frac{1}{p - p^2},
\end{eqnarray*}  
which is exactly the same as in (\ref{expected_number_of_tosses}).

To show (ii), we argue that Player $1$---who wins the game if the sound after the first whistle is another whistle---will win the game exactly if:
\begin{itemize}
\item the first coin toss yields H, and the toss immediately after the first \linebreak occurrence of a T (which produces the first whistle of the game) yields~H (producing the second whistle in a row); or, analogously,
\item the first coin toss yields T, and the toss immediately after the first occurrence of a H yields T.
\end{itemize}
Thus, the probability that Player $1$ wins the game is $p^2 + (1-p)^2 = 2p^2 - 2p + 1$, the same function of $p$ seen in (\ref{eq:pure2coins})---which has minimum value $0{.}5$ exactly at $p=0{.}5$---yet again.

\section{Acknowledgements.}
The ``triple or two straight doubles'' question was proposed to us by Guilherme Dias da Fonseca when he was studying randomized data structures. We are also grateful to Alexandre Stauffer, Luiz Henrique de Figueiredo, and to the anonymous referees for the numerous valuable suggestions.

\bigskip

\noindent\textbf{Vin{\'i}cius Gusm{\~a}o Pereira de S{\'a}} received his D.Sc.~degree from Universidade Federal do Rio de Janeiro, where he is currently an associate professor. When he is neither working with combinatorics or programming computers, he can be seen bicycling or trying to play the piano.

\noindent\textit{Departamento de Ci{\^e}ncia da Computa{\c c}{\~a}o, UFRJ, Brazil\\
vigusmao@dcc.ufrj.br}

\bigskip

\noindent\textbf{Celina Miraglia Herrera de Figueiredo} is a full professor at the Systems Engineering and Computer Science Program of COPPE, Universidade Federal do Rio de Janeiro, where she has been a collaborator since 1991.

\noindent\textit{COPPE, UFRJ, Brazil\\
celina@cos.ufrj.br}

\end{document}